\begin{document}

\title{Range Non-Overlapping Indexing}
\author{Hagai Cohen \and Ely Porat\thanks{This work was supported by BSF and ISF}}
\institute{Department of Computer Science, Bar-Ilan University, 52900 Ramat-Gan, Israel
\email{\{cohenh5,porately\}@cs.biu.ac.il}}

\maketitle

\begin{abstract}
We study the \emph{non-overlapping indexing} problem:
Given a text $T$, preprocess it in order to answer queries of the form:
given a pattern $P$, report the maximal set of non-overlapping occurrences of $P$ in $T$.
A generalization of this problem is the \emph{range non-overlapping indexing} where in addition we are given
two indexes $i,j$ to report the maximal set of non-overlapping occurrences between these two indexes.
We suggest new solutions for these problems.
For the non-overlapping problem our solution uses $O(n)$ space with query time of $O(m + occ_{NO})$.
For the range non-overlapping problem we propose a solution with $O(n\log^\epsilon n)$ space
for some $0<\epsilon<1$ and $O(m + \log\log n + occ_{ij,NO})$ query time.
\end{abstract}

\section{Introduction and Related Work}

Given a text $T$ of length $n$ over an alphabet $\Sigma$, the \emph{text indexing} problem is to
build an index on $T$ which can answer pattern matching queries efficiently:
Given a pattern $P$ of length $m$, we want to report all its occurrences in $T$.
There are some known solutions for this problem.
For instance, the \emph{suffix tree}, proposed by Weiner \cite{weiner73},
which is a compacted trie storing all suffixes of the text.
A suffix tree for text $T$ of length $n$ requires $O(n)$ space and can be built in $O(n)$ preprocessing time.
It has query time of $O(m + occ)$ where $occ$ is the number of occurrences of $P$ in $T$.

Range text indexing, also known as \emph{position restricted substring searching},
is the problem of finding a pattern $P$
in a substring of the text $T$ between two given positions $i, j$.
A solution for this problem was presented by Makinen and Navarro \cite{navarro06}.
It uses $O(n\log^\epsilon n)$ space and has query time of $O(m + \log\log n + occ)$.
Their solution is based on another problem - the range searching problem.

The \emph{range searching} problem is to preprocess a set of points in a $d$-dimensional space
for answering queries about the set of points which are contained within a specific range.
Alstrup et al \cite{alstrup00} proposed a solution for the orthogonal two dimensional
range searching problem when all the points are from $n\times n$ grid, which costs $O(n\log^\epsilon n)$ space
and has $O(\log\log n + k)$ query time, where $k$ is the number of points inside the range.
Grossi and Iwona in \cite{grossi05} have shown how to use Alstrup's data structure
to get all the points inside a particular range in a specific order using some kind of rank function.

In text indexing we are sometimes interested in reporting only the non-overlapping occurrences of $P$ in $T$.
There is such interest in fields such as pattern recognition, computational linguistics, speech recognition,
data compression, etc.
For instance, we might want to compress a text by replacing each non-overlapping occurrence of a substring of it with a pointer to a single copy of the substring.

Another problem is the \emph{string statistics} problem \cite{apostolico96,brodal02} which consists of preprocessing
a text $T$ such that when given a query pattern $P$, the maximum number of non-overlapping occurrences of $P$ in $T$ can be reported efficiently.
However, in the string statistics problem we only return the number of non-overlapping occurrences not the actual occurrences.
In this paper, we present the first non-trivial solution for the \emph{non-overlapping indexing} problem where we want to report the maximal sequence of non-overlapping occurrences of $P$ in $T$.

Keller et al \cite{keller07} proposed a solution for a generalization of this problem called
the \emph{range non-overlapping indexing} where we want to report the non-overlapping occurrences in a substring of $T$.
Their solution has query time of $O(m + occ_{ij,NO}\log\log n)$ and uses $O(n\log n)$ space,
where $occ_{ij,NO}$ is the number of the maximal non-overlapping occurrences in the substring $T[i:j]$.

Crochemore et al \cite{Crochemore08} suggested another solution for the \emph{range non-overlapping indexing} problem.
Their solution has optimal query time of $O(m + occ_{ij,NO})$ but requires $O(n^{1 + \epsilon})$ space.

In this paper, we present new solutions for the \emph{non-overlapping indexing} problem,
which use the periodicity of the text and pattern in order to minimize the query time.
Our solution for \emph{non-overlapping indexing} problem uses $O(n)$ space with optimal query time of $O(m + occ_{NO})$.
For the \emph{range non-overlapping indexing} problem we present a solution of $O(n\log^\epsilon n)$ space for some $0<\epsilon<1$ with $O(m + \log\log n + occ_{ij,NO})$ query time.

\section{Preliminaries}

Let $n$ be the length of the text $T$. And let $m$ be the length of the pattern $P$.
For two integers $i, j$ ($i\leq j$), $T[i:j]$ is the substring of $T$ from $i$ to $j$.

We will use the Suffix Tree as our main data structure.
Each leaf in the Suffix Tree represents a suffix in the text.
In each leaf we save two values:
$y$ - the start location of its suffix in the text
and $x$ - the location of the leaf in a left to right order of all
the leaves of the Suffix Tree (lexicographic order, for example).
We have two orders on the leaves and therefore on the suffixes as well: x-order and y-order.
The y-order is the text order and the x-order is the suffix tree leaves order.

When we search for a pattern $P$ in a Suffix Tree ST of $T$, we finish searching at some node $v$.
The subtree rooted by $v$ has all the occurrences of $P$ in $T$ in its leaves.
We denote by $l$ and $r$ the x-value of the leftmost leaf and the x-value of the rightmost leaf of that subtree respectively.
Therefore, the occurrences of $P$ in $T$ are all the leaves with x-value between $l$ and $r$.

We denote the number of occurrences of $P$ in $T$ by $occ$.
The number of non-overlapping occurrences will be denoted as $occ_{NO}$.
The number of occurrences of $P$ in $T[i:j]$ and the non-overlapping occurrences of $P$ in $T[i:j]$ will be denoted by $occ_{ij}$ and $occ_{ij,NO}$ respectively.

\section{A Solution for Non-Overlapping Indexing}

We use a new approach for solving this problem.
Our solution uses the periodicity of the text and the pattern.
We divide patterns for two types: \textbf{periodic} and \textbf{aperiodic}.
A different strategy will be used for each type.

\begin{definition}
A pattern that can appear more than twice overlapping is called a \textbf{periodic pattern}.
A pattern that can appear at most twice overlapping is called an \textbf{aperiodic pattern}.
\end{definition}

\subsection{Aperiodic Pattern}
In the aperiodic case we use the periodicity of the pattern to answer a query.
We use the familiar Suffix Tree to get all the leaves that correspond to the given pattern.
After we have all the leaves, we need to remove the overlapping occurrences.
This can be done by sorting the leaves in y-order, going over the sorted list and filtering the overlapping occurrences.
However, sorting $occ$ items costs $O(occ\log occ)$ which is greater than the optimal $O(occ)$.
In order to solve this sorting part we use the following theorem.

\begin{theorem}\label{teo:renaming}
All occurrences of a pattern can be reported and sorted in text order in $O(m + occ)$ time using $O(n)$ space.
\end{theorem}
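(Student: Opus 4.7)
The plan is to augment a standard suffix tree of $T$ (built in $O(n)$ space; pattern descent for $P$ takes $O(m)$ and lands at the node $v$ whose $occ$ leaves are exactly the occurrences of $P$) with an auxiliary, linear-space structure that enumerates the leaves of any subtree in y-order at constant amortized cost per leaf. Combined, these cost $O(n)$ space and $O(m + occ)$ query time, matching the claim, so the whole proof reduces to designing that auxiliary structure.

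The augmentation I would install has two ingredients. First, a single pointer $\mathrm{head}(v)$ at every internal node, pointing to the leaf of $\mathrm{subtree}(v)$ with the smallest y-value; these are computed in one bottom-up pass in $O(n)$ time and space. Second, a forwarding mechanism on the leaves: at each leaf $\ell$ a ``text-adjacency'' pointer to the leaf $\ell^*$ with y-value $y(\ell)+1$, plus a compact bundle of ``escape'' pointers that take over when the text-adjacency pointer lands outside $\mathrm{subtree}(v)$. The observation driving the design is that $\ell^*$ is the correct in-subtree y-successor of $\ell$ at exactly the ancestors lying above $\mathrm{LCA}(\ell,\ell^*)$; at ancestors strictly below that LCA the successor must come from elsewhere. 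I would organize the escape pointers along a heavy-path decomposition of the suffix tree, so that each escape can be charged to a light edge, and I would equip the tree with constant-time LCA and level-ancestor oracles so that the correct pointer for the query node $v$ can be picked in $O(1)$ per hop. A query then emits $\mathrm{head}(v)$ and performs $occ - 1$ constant-time hops through these pointers, each producing the next leaf of $\mathrm{subtree}(v)$ in y-order.

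The main obstacle, I anticipate, is proving that the escape-pointer bookkeeping is genuinely linear in $n$ while still allowing the right pointer to be located in $O(1)$ at any query node. The plan is to argue that, as the query node $v$ slides along the ancestor chain of a leaf $\ell$, the correct in-subtree y-successor changes only when $v$ crosses a light edge of the heavy-path decomposition; since the total weight of light edges across the tree is $O(n)$ and level-ancestor/LCA queries are constant-time, both the space and the per-hop selection cost come out linear. If this charging goes through, the construction fits in $O(n)$ space and answers each query in $O(m + occ)$ time, as required; the fallback, should the bound be tight only amortized and not worst-case, would be to give up the general claim and settle for the aperiodic regime, which is already sufficient for the use of the theorem invoked in this section.
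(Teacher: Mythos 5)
Your reduction of the theorem to an auxiliary structure that enumerates the leaves of $\mathrm{subtree}(v)$ in y-order at $O(1)$ per leaf is the right framing (and matches what the paper must also achieve), but the structure you sketch rests on a step that you flag yourself and do not close, and which is in fact false as stated: that the in-subtree y-successor of a leaf $\ell$ changes only when the query node $v$ crosses a light edge of the heavy-path decomposition. Consider a node $u$ on the root-to-$\ell$ path whose heavy child $w$ contains $\ell$ and whose light sibling subtree contains the current successor $\ell'$ of $\ell$ (i.e.\ $\mathrm{LCA}(\ell,\ell')=u$). Moving $v$ from $u$ to $w$ crosses a \emph{heavy} edge, yet the successor changes because $\ell'$ drops out of the subtree. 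So the changes cannot be charged to light edges on $\ell$'s ancestor path. Worse, the quantity you would need to materialize --- the set of all distinct pairs $(\ell,\mathrm{succ}_v(\ell))$ over all ancestors $v$ --- equals the total number of y-adjacent leaf pairs whose LCA is exactly $u$, summed over all internal nodes $u$; this is the number of ``cross-child boundaries'' in the merged y-orders at every node and can be $\Theta(n\cdot\mathrm{height})$, i.e.\ superlinear, so no scheme that stores all of these escape pointers explicitly fits in $O(n)$ space. Your fallback (retreating to the aperiodic case) does not repair this: the theorem is precisely about producing the sorted list, and the data structure is broken independently of which patterns you query it with.

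For contrast, the paper avoids successor pointers entirely. It stores at each leaf its y-rank inside a hierarchy of nested subtrees of sizes $n^{1/2^i}$, which packs into at most $2\log n$ bits per leaf ($O(n)$ words total); at query time it retrieves the $occ$ leaves unsorted from the suffix tree and radix sorts them using the stored rank whose domain is at most $occ^2$, which costs $O(occ)$. That is, the sorting is done per query with small pre-renamed keys rather than by precomputing successor links, and this is what makes the linear space bound immediate. If you want to salvage a pointer-based route, you would need a genuinely sublinear-per-leaf encoding of the successor information (this is essentially the ``sorted range reporting'' problem), not the light-edge charging you propose.
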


\begin{proof}
We use a Suffix Tree to get all the occurrences in $O(m + occ)$ time.
For sorting all the occurrences we will use a renaming method on the Suffix Tree.

Each leaf has its location, i.e., its y-value index, in the whole tree.
Saving this location for a leaf costs $\log n$ bits because the whole tree has $n$ leaves.
Hence, the domain for the location is $n$.
Nevertheless, we are interested in the order of the leaves in a subtree of the occurrences and not in the whole tree.
Thus, we would like to save the location of a leaf for a subtree with less leaves.
If we save for each leaf its location in a subtree with less leaves,
for example $\sqrt{n}$ leaves, it will cost us only $\log\sqrt{n}$.
Therefore, for each leaf, aside from keeping its location in the whole suffix tree,
we save its location in a subtree of size $\sqrt{n}$, its location in a subtree of size $\sqrt[4]{n}$,
and so on for all subtrees of size $\sqrt[2^i]{n}$ for $i \geq 1$ until we reach a constant size.

We use Radix Sort which can sort $n$ numbers in a domain of $n^2$ in $O(n)$ time for sorting the leaves by their locations.
Given a subtree whose leaves we wish to sort in y-order, we can sort them by the locations of the subtree of size at most $O(occ^2)$, this will cost us only $O(occ)$ by using Radix Sort because we sort $occ$ items in a domain of at most $occ^2$.\\*
In each leaf we save:
\begin{enumerate}
    \item[] $\log n$ bits for its location in the ST.
    \item[] $\log\sqrt{n}$ bits for its location in the subtree of size $\sqrt{n}$.
    \item[] $\log\sqrt[4]{n}$ bits for its location in the subtree of size $\sqrt[4]{n}$.
    \item[] etc.
\end{enumerate}
This sums as following:
$\log n + \log\sqrt{n} + \log\sqrt[4]{n} + \log\sqrt[8]{n}... = \log n + \frac{1}{2}\log n + \frac{1}{4}\log n + \frac{1}{8}\log n +... \leq 2\log n$.

Therefore, we save only $2\log n$ bits per leaf. We have $n$ leaves summing up to $n\cdot 2\log n = O(n\log n)$ bits which is $O(n)$ space.
\qed
\end{proof}

Theorem~\ref{teo:renaming} provides us a sorted list of all occurrences in $O(m + occ)$ query time and $O(n)$ space.
By filtering the overlapping occurrences which costs $O(occ)$ time, we are through.
Because in an aperiodic pattern, $O(occ)$ = $O(occ_{NO})$, the query time equals to $O(m + occ_{NO})$.

\subsection{Periodic Pattern}
The periodic case is more complex.
In this case we use the periodicity of the text in order to answer a query.

\begin{definition}
A node in the Suffix Tree which represents a suffix which is a periodic pattern is called \textbf{period node}.
\end{definition}

\begin{definition}
Let $s$ be a string. We define a \textbf{period} of $s$ to be a string $p$, such that $s=p^{t}\acute{p}$, for $t \geq 1$ where $\acute{p}$ is prefix of $p$.
\end{definition}

\begin{lemma}
A period node has only one son which can also be a period node.
\end{lemma}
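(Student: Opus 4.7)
The plan is to pin down, for a period node $v$ whose corresponding string $s$ of length $\ell$ has smallest period of length $q$ (so that $\ell > 2q$, since by definition $s$ admits at least three pairwise overlapping occurrences), the unique character that can extend $s$ in a way compatible with its periodicity, namely $c^\star := s[\ell - q + 1]$. I will show that for any other character $c' \neq c^\star$, the extension $s \cdot c'$ fails to be periodic, so at most one child of $v$ in the suffix tree can itself be a period node.

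Assume toward a contradiction that $s \cdot c'$ is a period node with $c' \neq c^\star$. Then $s \cdot c'$ has three pairwise overlapping occurrences $o_1 < o_2 < o_3$, giving $o_3 - o_1 \leq \ell$. Each shift $o_{i+1} - o_i$ is at least $q$, for otherwise two overlapping copies of $s$ would endow $s$ with a period smaller than $q$. Setting $d := o_2 - o_1$, this yields $d \leq \ell - q$, which places $d$ inside the Fine--Wilf regime for periods $q$ and $d$ of $s$. Fine--Wilf then forces $\gcd(q, d)$ to be a period of $s$, and by minimality of $q$ we conclude $q \mid d$; write $d = kq$ with $k \geq 1$.

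Since $o_1 + \ell$ lies inside the occurrence of $s$ at $o_2$, one has $c' = T[o_1 + \ell] = s[\ell - d + 1] = s[\ell - kq + 1]$. Iterating the identity $s[i] = s[i + q]$ (valid for $i + q \leq \ell$) exactly $k - 1$ times collapses $s[\ell - kq + 1]$ to $s[\ell - q + 1] = c^\star$, so $c' = c^\star$, contradicting the assumption. Hence only the child reached by $c^\star$ can itself be a period node, as claimed. The main subtlety I foresee is making sure the Fine--Wilf hypothesis is met; this is handled by exploiting the existence of a third overlapping occurrence of $s \cdot c'$ to bound $d$ by $\ell - q$ rather than merely by $\ell$, placing $d$ safely in the regime where the lemma applies and keeping the periodicity iteration on $s$ entirely within bounds.
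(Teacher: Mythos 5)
Your Fine--Wilf argument is a correct and genuinely more rigorous version of what the paper only asserts informally (``there can't be more than one such character $c$''): for single-character extensions it really does show that $s\cdot c'$ can have three pairwise overlapping occurrences only when $c'=c^\star=s[\ell-q+1]$, and your device of using the third occurrence to force $d\le \ell-q$ is exactly what makes the Fine--Wilf hypothesis available. The gap is in the final step, where you pass from single-character extensions to sons of $v$. In a suffix tree (a \emph{compacted} trie) a son $w$ of $v$ represents $s\cdot u$ for an edge label $u$ that may be long, and $w$ being a period node means that $s\cdot u$ --- not $s\cdot c'$ --- has three pairwise overlapping occurrences. Three occurrences of the long string $s\cdot u$ at mutual distance up to $|su|-1$ need not be pairwise overlapping as occurrences of the short prefix $s\cdot c'$, so your single-character result does not apply to $w$. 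The underlying obstruction is that $s\cdot u$ may owe its periodicity to a period $p$ with $q+p-\gcd(q,p)>\ell$, in which case Fine--Wilf never relates $p$ to $q$ and the first character of $u$ is not governed by the period of $s$ at all.

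This is not a pedantic worry. Take $s=\texttt{aacaaacaaa}$ (so $\ell=10$, $q=4$, $c^\star=s[7]=\texttt{c}$) and let $T$ contain the run $(\texttt{aacaaacaa})^{4}\texttt{a}$ of length $37$ delimited by fresh symbols. In that run $s$ occurs at offsets $0,9,18,27$, each time followed by $\texttt{a}$ (or the delimiter), and the son of $s$ along the edge beginning with $\texttt{a}$ is the node for $(\texttt{aacaaacaa})^{2}\texttt{a}$, of length $19$, which occurs at offsets $0,9,18$; since $18\le 19-1$ these three occurrences pairwise overlap, so this son is a period node even though $\texttt{a}\neq c^\star$. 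Hence the conclusion you draw --- that only the child reached by $c^\star$ can be a period node --- is not established by your argument and is contradicted by this example; a complete proof of the lemma has to reason about the sons' full edge labels and the (possibly much larger) periods of their strings. To be fair, the paper's own one-line proof makes exactly the same unexamined assumption that a periodic son must ``continue the period'' of $s$, so your write-up rigorizes the part the paper hand-waves while inheriting the step that actually needs work.
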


\begin{proof}
Let $a$ be a period node. Therefore, the string represented by $a$ has a period $p$.
For a son of $a$ to be a period node too it must continue the period $p$.
If $a$ ends with a character $c$ than the node which has the next character in $p$ that is after that $c$ is the period node.
There can be only one child of $a$ which can start with this character.
Thus, a period node can have only one son which is also period node.
\qed
\end{proof}

Note that by the period definition a string can have more than one period,
by taking $p_{new}=pp$ for example.
Nevertheless, each period must overlap with a shorter period of the same string.
Thus, there can't be more than one such character $c$.

\begin{definition}
The path that starts from the a period node and goes through all nodes
which continue that period in the Suffix Tree is called a \textbf{period path}.
We denote the number of nodes in a period path to be the period path length.
\end{definition}

\begin{lemma}\label{lem:periodPathLen}
Let $p_{2}$ be a period path on a path $PT$ to the root in the Suffix Tree.
Than $p_{2}$ must be at least twice as long as the previous period path $p_{1}$ on $PT$.
\end{lemma}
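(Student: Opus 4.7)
The plan is to apply the Fine--Wilf periodicity theorem to the string $s$ on the path $PT$ (the suffix of $T$ spelled out by $PT$ when oriented from the root down). Write $p_1$, $p_2$ for the two consecutive period paths in question, with periods $q_1$, $q_2$ of lengths $a$, $b$ respectively. Let $v$ be the deepest node of $p_1$, whose string has length $L_1$, and let $u$ be the shallowest node of $p_2$, whose string has length $\ell_2 > L_1$; both strings are prefixes of $s$.

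I would first rule out $a \mid b$. If $q_2$ were a concatenation of copies of $q_1$, the string at $u$ would also admit $q_1$ as a period, and because $\ell_2 > 2b \geq 2a$ this string would itself be a periodic pattern with period $q_1$. But then the preceding lemma on the uniqueness of a period-son would force $u$ to lie on the period path extending $p_1$, contradicting the assumption that $p_1$ and $p_2$ are distinct. So $\gcd(a,b) < a$, and in fact $\gcd(a,b) \leq a/2$.

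Next I would invoke Fine--Wilf at $v$. Because the string at $u$ has period $q_2$, every prefix of length at least $b$ inherits that period; so whenever $L_1 \geq b$ the string at $v$ has both periods $a$ and $b$. Fine--Wilf then says that as soon as $L_1 \geq a + b - \gcd(a,b)$ the string at $v$ has period $\gcd(a,b) < a$, contradicting $a$ being the shortest period witnessed by $p_1$. In either sub-case one obtains $b > L_1 - a + \gcd(a,b)$; combining this with $\ell_2 > 2b$ (because $u$ is itself periodic) and $\ell_1 > 2a$ (the same bound applied at the shallow end of $p_1$) gives $\ell_2 > 2L_1 - \ell_1$, i.e., $p_2$ reaches a depth at least twice that reached by $p_1$.

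The final step is to upgrade this depth inequality to the stated inequality on the number of nodes. Each node on a period path is a branching node of the suffix tree, and along $PT$ inside a periodic region the spacing between consecutive branching nodes is governed by the common period, so a doubling of depth range should propagate to a doubling of node count. I expect the main technical obstacle to be exactly this bridge from the analytic Fine--Wilf bound to the combinatorial node-count claim, together with handling degenerate cases such as $p_1$ consisting of only one or two nodes, where the Fine--Wilf slack $-a$ dominates and the bound must be sharpened by a direct case analysis; the underlying periodicity argument itself is clean.
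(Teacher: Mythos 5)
Your periodicity core is sound and takes a genuinely different route from the paper. The paper argues directly and without machinery: since the string at the first node of $p_2$ has period $q_2$ and begins with $q_1$, shifting by $b=|q_2|$ forces another copy of $q_1$ to appear at position $b+1$; between those two copies sits the character that breaks $q_1$'s period (the non-period node separating $p_1$ from $p_2$ guarantees it exists), so $q_2$ must contain $q_1$, the breaking character, and $q_1$ again, giving $b\geq 2a$ outright. Your Fine--Wilf argument reaches a comparable (in fact sharper) conclusion, $b > L_1 - a + \gcd(a,b)$, which accounts for the full extent $L_1$ of $p_1$ rather than just its period. One organizational flaw: you rule out $a \mid b$ by assuming $q_2$ is a power of $q_1$, but these are equivalent only when $L_1 \geq b$, so that $q_2$ is a prefix of the period-$a$ string at $v$; you should split on $L_1$ versus $b$ first (when $L_1 < b$ you get $b > L_1 \geq \ell_1 \geq 2a$ for free and need neither Fine--Wilf nor the gcd claim).

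The genuine gap is the one you flag yourself, and it is not a removable technicality of your approach: the lemma, under the paper's own definition (``the number of nodes in a period path''), asserts a doubling of \emph{node counts}, while everything you derive --- and in particular $\ell_2 > 2L_1 - \ell_1$ --- controls only string depths and period lengths. Worse, that inequality only lower-bounds where $p_2$ \emph{starts}; it says nothing about its deepest node $L_2$ or about how many suffix-tree nodes lie between $\ell_2$ and $L_2$. Your proposed bridge, that ``the spacing between consecutive branching nodes is governed by the common period,'' is false: the nodes of a period path are branching nodes of the suffix tree, and nothing about a long period or a deep starting point forces many branchings to occur along it (a period path can consist of a single node however large its period is). So the node-count version cannot be recovered from the depth bound. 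The honest resolution is that the doubling should be read on the period length (or string depth), which is all that Lemma~3 actually uses --- the paper's own proof silently makes the same identification --- and at that level either your argument or the paper's simpler one suffices.
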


\begin{proof}
On $PT$, between $p_{1}$ and $p_{2}$ there must be at least one node which is not a period node.
For $p_{2}$ to be a period path is must represent a period suffix which must be started with the period of $p_{1}$.
Moreover, $p_{2}$ period suffix should be continued by the character
of the next node after $p_{1}$ which is not a period node.
After it there must be the period of $p_{1}$ again.
Therefore, $p_{2}$ length is at least twice longer than $p_{1}$.
\qed
\end{proof}

\begin{lemma}\label{lem:periodPathNum}
The largest number of different period paths contained in the path from the root to a period node is $\log n$.
\end{lemma}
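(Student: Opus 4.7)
The plan is a straightforward doubling argument that leverages Lemma~\ref{lem:periodPathLen} directly. Let $p_1, p_2, \ldots, p_k$ denote the distinct period paths that lie on the root-to-node path $PT$, listed in the order they appear from the root downward, and let $\ell_i$ be the length (number of nodes) of $p_i$. By the definition of a period path each $\ell_i \geq 1$, so $\ell_1 \geq 1$, and Lemma~\ref{lem:periodPathLen} gives $\ell_{i+1} \geq 2\ell_i$ for every $i$. A trivial induction then yields $\ell_i \geq 2^{i-1}$.

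Next I would observe that the $p_i$ are pairwise disjoint pieces of the single path $PT$, so their total length is at most the total number of nodes along $PT$. Since a suffix tree on a text of length $n$ has depth at most $n$, we obtain
\begin{equation*}
2^{k} - 1 \;=\; \sum_{i=1}^{k} 2^{i-1} \;\leq\; \sum_{i=1}^{k} \ell_i \;\leq\; n.
\end{equation*}
Rearranging, $k \leq \log_2(n+1) = O(\log n)$, which is the stated bound. \qed

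The only thing that requires a moment of care is confirming that the ``length'' appearing in Lemma~\ref{lem:periodPathLen} is indeed what we need to sum here (i.e.\ the count of nodes along $PT$ contributed by the period path), and that distinct period paths contribute disjoint node ranges of $PT$; both follow immediately from the definition of a period path as a maximal run of consecutive period nodes continuing a common period, so there is no real obstacle beyond invoking the previous lemma and summing a geometric series.
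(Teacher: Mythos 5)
Your proof is correct and follows essentially the same route as the paper's: both apply Lemma~\ref{lem:periodPathLen} to get the doubling of period-path lengths along $PT$ and then bound the count by comparing against the text length $n$. Your version is slightly more careful in summing the geometric series over disjoint segments of $PT$ rather than just noting that the last path would exceed $n$, but the underlying argument is identical.
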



\begin{proof}
Let $PT$ be the path from the root to some period node in the Suffix Tree.
According to Lemma ~\ref{lem:periodPathLen} each period path on $PT$
must be at least twice as long as the previous period path on $PT$.
Therefore, if we have more than $\log n$ period paths on $PT$,
than the length of the last period path must be greater than $n$ which is the text length.
Hence, the number of period paths on the same path can't be more than $\log n$.
\qed
\end{proof}

\begin{definition}
A \textbf{period sequence} is the maximum substring in the text of some period which is repeated more than twice.
We will mark it as $[s,e]$, where $s$ and $e$ are the start index and the end index of the period sequence accordingly.
The period sequences of a period pattern are all the period sequences which start with that periodic pattern.
The \textbf{period length} of a period sequence is the length of the period inside repeated the sequence.
\end{definition}

\begin{example}
Lets $T$ be the text ``abababcabababcabababc''.
The period sequences are:
For the period ``ab'', the period sequences are [1,6], [8,13], [15,20] which have period length $2$.
For the period ``abababc'', the period sequence is [1,21] which has period length $7$.
\end{example}

\begin{lemma}\label{lem:periodSequenceToOcc}
Given a list $L$ of all the period sequences of some periodic pattern in the text.
All the non-overlapping occurrences of that periodic pattern can be retrieved from this list in $O(occ_{NO})$ time.
\end{lemma}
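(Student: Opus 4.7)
The plan is to handle each period sequence in $L$ independently: once I know its endpoints $[s,e]$ and its period length $d$, the occurrences of $P$ inside it are forced into an arithmetic progression, and the maximum non-overlapping subset can be emitted in constant time per element.

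First, I will show that inside a single period sequence $[s,e]$ with period length $d$, the occurrences of $P$ are exactly $s, s+d, s+2d, \ldots, s+Kd$, where $K = \lfloor (e - s + 1 - m)/d \rfloor$. The ``$\supseteq$'' direction is immediate because $T[s..e]$ has period $d$ and begins with $P$; the ``$\subseteq$'' direction uses the fact that any shift not a multiple of $d$ would break the period inside $T[s..e]$, and an occurrence of $P$ before $s$ or past $s+Kd$ would force the period to extend past $[s,e]$, contradicting maximality.

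Next, within such a progression the greedy leftmost rule picks a maximum non-overlapping subset: two occurrences at $s + id$ and $s + jd$ with $i<j$ are non-overlapping iff $(j-i)d \geq m$, i.e.\ iff $j - i \geq \lceil m/d \rceil$, so emitting positions at stride $\lceil m/d \rceil \cdot d$ starting from $s$ is both legal and optimal, and each emission costs $O(1)$. Then I will argue that picks from different period sequences never conflict: by maximality, two distinct sequences $[s,e]$ and $[s',e']$ with $e<s'$ are already separated in $T$, and since the last occurrence inside the first sequence ends at some position $\leq e$, it ends strictly before $s'$, so the union of per-sequence picks is still non-overlapping.

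Summing over $L$ then yields total time $O(|L| + occ_{NO}) = O(occ_{NO})$, because every sequence in $L$ contributes at least its own starting occurrence at $s$. The main obstacle, in my view, is the first step: the lemma implicitly assumes a canonical period for each entry of $L$, and I need to verify carefully that the induced arithmetic progression recovers every occurrence of $P$ inside $[s,e]$ without missing any or double-counting, even when $P$ itself admits several distinct periods.
\qed
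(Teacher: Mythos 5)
Your proof takes essentially the same route as the paper's: emit, for each period sequence $[s,e]$ with period length $d$, the arithmetic progression starting at $s$ with stride $\lceil m/d\rceil\cdot d$, and charge the $O(|L|)$ overhead to the fact that each sequence contributes at least one reported occurrence. You supply several justifications the paper leaves implicit (optimality of the greedy stride, non-interference between distinct sequences, and the exact characterization of occurrences inside a run), which is a strictly more careful write-up of the same argument.
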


\begin{proof}
For a period sequence $[s,e]$ with period length $pl$, the non-overlapping occurrences of a
periodic pattern $P$ with length $m$ are the group:
$s + i*step | step = \lceil \frac{m}{pl} \rceil * pl, 0\leq i \leq \frac{e-s-1}{step}$
which can be easily calculated.

We report all the occurrences in each period sequence in $L$.
The number of all occurrences we report is $O(occ_{NO})$, so the total time for reporting all the non-overlapping occurrences from $L$ is $O(occ_{NO})$.
\qed
\end{proof}

For answering the non-overlapping indexing we use the following data structure.
We build a data structure for each period path on the Suffix Tree,
saving a list of all period sequences sorted by their length for each period path.
Each period node in the Suffix Tree is on a period path.
We save a pointer from each period node on the Suffix Tree to the period sequence list of its period path.
This pointer will point to the period node appropriate length on the period sequences list.

\begin{theorem}\label{teo:NLOGNSolution}
Using the data structure described above, all period sequences of a period pattern
 can be retrieved in $O(m + occ_{NO})$ time using $O(n\log n)$ space.
\end{theorem}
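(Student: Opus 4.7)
The plan is to verify the query-time and the space bounds separately by exploiting the sorted per-period-path lists together with the pointers installed at every period node. For the query, I would first descend the suffix tree on $P$ in $O(m)$ time, arriving at a node $v$ whose represented string is $P$. Since $P$ is periodic with some period $p$, the node $v$ lies on the period path $\pi$ for $p$, and its precomputed pointer drops into $\pi$'s length-sorted list at the first entry of length at least $|P|$. I then scan forward, producing every period sequence $[s,e]$ of period $p$ with $e-s+1\geq |P|$; because $P$ is itself a prefix of $p^{\infty}$, each such sequence begins with $P$ at position $s$ and so qualifies as a period sequence of $P$. Conversely, every period sequence of $P$ has period $p$ and already lives in this list, so the scan retrieves precisely the target set.

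For the time bound I would observe that each retrieved period sequence contains at least one non-overlapping occurrence of $P$ (the copy starting at $s$), so the number of sequences is at most $occ_{NO}$ and the forward scan costs $O(occ_{NO})$. Combined with the $O(m)$ suffix-tree descent this gives the claimed $O(m+occ_{NO})$; Lemma~\ref{lem:periodSequenceToOcc} can later expand the retrieved list into the actual occurrences within the same budget.

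The space bound is the more delicate half. The suffix tree and the $O(n)$ per-period-node pointers account for $O(n)$, so the real task is bounding the total number of period sequences stored across all lists. My plan is a left-endpoint counting argument: if $[s,e]$ is a period sequence with period $p$, then the prefix $p^{t}\acute{p}$ of $T[s..n]$ forces the leaf for $T[s..n]$ in the suffix tree to descend through the period path $\pi_{p}$ associated with $p$. Different periods give different period paths, so the number of period sequences starting at $s$ is bounded by the number of period paths the leaf passes through, which Lemma~\ref{lem:periodPathNum} caps at $\log n$. Summing over the $n$ starting positions yields $O(n\log n)$ period sequences overall, each stored in $O(1)$ words. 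The trickiest step is justifying that distinct periods producing period sequences at the same starting position really correspond to distinct period paths traversed by that leaf; this relies on the definition of a period path as a maximal chain of period nodes sharing a common period, combined with the doubling behaviour already proved in Lemma~\ref{lem:periodPathLen}.
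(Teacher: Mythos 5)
Your proposal is correct and follows essentially the same route as the paper: the query follows the period node's pointer into the length-sorted list and charges each retrieved period sequence to a non-overlapping occurrence it contains (the paper asserts this bound with the same brevity), and the space bound comes from Lemma~\ref{lem:periodPathNum} giving at most $\log n$ period paths per root-to-leaf path. Your left-endpoint counting of period sequences is a slightly cleaner articulation of the paper's ``$n$ nodes, each in at most $\log n$ period paths'' accounting, but it is the same underlying argument.
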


\begin{proof}
On a query we go to that data structure, get all the period sequences and by Lemma~\ref{lem:periodSequenceToOcc}
we calculate all the non-overlapping occurrences.
This will take $O(occ_{NO})$ time, because the number of the period sequences that we will get is less than the number of the non-overlapping occurrences of the pattern.
If we come up with a long pattern we won't get shorter period sequences which don't fit the pattern
so we won't get unnecessary period sequences.

The space for this data structure is $O(n\log n)$.
This is because there are $n$ nodes where each one can be at most in $\log n$ period paths.
For each node, we save all its period paths so it needs $O(n\log n)$.
\qed
\end{proof}

Now, we will show how to reduce the space needed for this data structure.
\begin{definition}
Let's define a \textbf{degree of a period sequence} to be the maximum degree of a period sequence included in it plus one.
A period sequence without any period sequences in it will has the degree $0$.
\end{definition}

\begin{lemma}\label{lem:patternPeriodNum}
The maximum degree of a period sequence can be at most $\log n$.
\end{lemma}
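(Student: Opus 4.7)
The plan is to show that along any chain of strictly nested period sequences $S_0 \subset S_1 \subset \cdots \subset S_d$ (with each $S_i$ of degree $i$ and witnessing the degree of $S_d$), the corresponding period lengths at least double: $|p_{i+1}| \geq 2|p_i|$ for every $i$. Granting this, since a period sequence repeats its period more than twice we have $3|p_d| \leq |S_d| \leq n$, so $|p_d| \leq n/3$; combined with $|p_d| \geq 2^d|p_0| \geq 2^d$, this yields $2^d \leq n$ and hence $d \leq \log n$, which is exactly the claim.

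To establish the doubling I would argue by contradiction using Fine and Wilf's periodicity lemma. Suppose $|p_{i+1}| < 2|p_i|$. The subcase $|S_i| < |p_{i+1}|$ is immediate: it would give $|p_{i+1}| > |S_i| \geq 3|p_i|$, directly contradicting the hypothesis. Otherwise, since $S_i$ is a substring of $S_{i+1}$ (which has period $|p_{i+1}|$), $S_i$ inherits the period $|p_{i+1}|$ in addition to its own period $|p_i|$. Because $|S_i| \geq 3|p_i| > |p_i| + |p_{i+1}|$, Fine--Wilf forces $S_i$ to also admit the smaller period $g := \gcd(|p_i|, |p_{i+1}|)$, with $g < |p_i|$ (the strict inequality uses that $|p_i|$ cannot divide $|p_{i+1}|$ since $|p_i| < |p_{i+1}| < 2|p_i|$).

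A maximal-extent argument then delivers the contradiction. Since $g$ divides $|p_{i+1}|$, a length-$|p_{i+1}|$ window of $S_i$ having period $g$ is a cyclic shift of the repeating block of $S_{i+1}$, so that block---and therefore all of $S_{i+1}$---has period $g$. Let $S^*$ be the maximal period-$g$ region containing $S_i$; the observations above give $S_i \cup S_{i+1} \subseteq S^*$. Conversely, because $g$ divides both $|p_i|$ and $|p_{i+1}|$, period $g$ on any region implies periods $|p_i|$ and $|p_{i+1}|$ on the same region, so the maximality of $S_i$ and $S_{i+1}$ as period-$|p_i|$ and period-$|p_{i+1}|$ sequences at this location forces $S^* \subseteq S_i$ and $S^* \subseteq S_{i+1}$. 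Combining, $S_i = S^* = S_{i+1}$, contradicting strict containment. The degenerate cases $|p_{i+1}| \leq |p_i|$ are ruled out by the same argument with the roles of the two periods swapped.

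The main obstacle I expect to be is the maximal-extent step: one must carefully propagate the divisibility $g \mid |p_i|$ and $g \mid |p_{i+1}|$ through the definition of ``maximal substring where the period repeats more than twice'' to conclude that the three maximal regions actually coincide. Once this bookkeeping is in place, the doubling of period lengths --- and hence the $\log n$ bound on the degree --- follows immediately.
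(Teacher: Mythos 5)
Your proposal is correct and shares the paper's overall skeleton: both arguments take the chain of nested period sequences $S_0 \subset S_1 \subset \cdots \subset S_d$ guaranteed by the definition of degree, establish a geometric growth along the chain, and conclude $d \leq \log n$ because everything lives inside a text of length $n$. The difference is in which quantity grows and how the growth is justified. The paper asserts, in a single unsupported sentence, that each period sequence in the chain is at least twice as long as the one it contains, and then bounds the length of the outermost sequence by $n$. You instead prove that the \emph{period lengths} double, $|p_{i+1}| \geq 2|p_i|$, via Fine and Wilf: if $|p_{i+1}| < 2|p_i|$ then $S_i$ (of length at least $3|p_i| > |p_i| + |p_{i+1}|$) carries both periods, hence the period $g = \gcd(|p_i|,|p_{i+1}|)$, and the maximal-extent argument collapses $S_i$ and $S_{i+1}$ into the same maximal period-$g$ region, contradicting strict nesting. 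This buys you an actual proof of the one non-trivial step that the paper hand-waves, at the cost of the bookkeeping you flag yourself: you need the convention that ``repeated more than twice'' gives $|S_i| \geq 3|p_i|$ (so that Fine--Wilf applies when $|p_{i+1}| < 2|p_i|$), and that the nesting in the definition of degree is proper. Note also that the exact constant in the paper's version of the doubling claim is not obviously $2$ (a direct argument most naturally yields $|S_{i+1}| \geq 3|p_{i+1}| > \frac{3}{2}|S_i|$), whereas your factor-$2$ bound on period lengths is clean; either constant suffices for the $O(\log n)$ conclusion.
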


\begin{proof}
Let $ps$ be the period sequence with the maximum degree in the text.
In $ps$ there is a period sequence with a degree decreased by one.
In that period sequence there is another period sequence with a degree decreased by one.
And so on until we receive a period sequence $ps_{0}$ with the degree $0$.
The length of each period sequence from $ps_{0}$ to $ps$ is at least twice the length of the period sequence in it.
The maximum length of $ps$ can be at most $n$, therefore, its degree can be at most $\log n$.
\qed
\end{proof}

\begin{lemma}\label{lem:allPeriodNum}
There are at most $O(n)$ period sequences.
\end{lemma}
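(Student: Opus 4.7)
The plan is to partition period sequences by their degree (as just defined), bound each class, then sum. By Lemma~\ref{lem:patternPeriodNum} only degrees $d \in \{0, 1, \dots, \log n\}$ occur, so if I can show the number of degree-$d$ period sequences is $O(n/2^d)$, the total will telescope to $O(n)$.

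First I would establish a length-doubling lemma: if a period sequence $A$ of degree $d$ strictly contains a period sequence $B$ of degree $d-1$, then $|A| \geq 2|B|$. The argument: $A$ has some period $p_A$ of length $\ell_A$ and $B$ has period $p_B$ of length $\ell_B$. Since $B \subseteq A$ and both are long enough to inherit one another's periodicity, $\ell_A$ and $\ell_B$ are both periods of (a substring of) $B$ or of $A$; applying Fine--Wilf, $\ell_B \mid \ell_A$. The case $\ell_A = \ell_B$ would force $A$ and $B$ to be the same maximal sequence, so $\ell_A \geq 2\ell_B$. Combined with the sequence-length lower bound $|A| \geq 3\ell_A$ and $|B| < $ a constant factor of $\ell_B$ times its exponent, we get $|A| \geq 2|B|$. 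Unwinding this by induction on $d$, every degree-$d$ period sequence has length at least $c \cdot 2^d$ for a constant $c$ (essentially $c = 3$ since a degree-$0$ period sequence has length at least $3 \cdot 1$).

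Second I would argue near-disjointness within a fixed degree. Two period sequences of the same degree cannot be properly nested (nesting strictly increases degree). For the residual case of partial overlap with different periods, Fine--Wilf applied to the overlap yields a common refined period, which would force one sequence to be contained in (a larger period sequence containing) the other, contradicting equal degree or the maximality condition in the period-sequence definition. The net effect is that degree-$d$ period sequences can be charged to disjoint portions of the text, each of length $\Omega(2^d)$, giving at most $O(n/2^d)$ of them.

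Summing over $d = 0, \ldots, \log n$ yields $\sum_d O(n/2^d) = O(n)$, as required. The main obstacle is the second step: while the degree condition rules out strict nesting, ruling out pathological partial overlaps at the same degree requires a careful Fine--Wilf-type argument together with the maximality condition of the period sequence definition. If this turns out to be delicate, an alternative is to invoke the classical theorem of Kolpakov and Kucherov bounding the number of maximal repetitions in a string of length $n$ by $O(n)$ and observe that each period sequence corresponds to such a repetition of exponent at least~$3$.
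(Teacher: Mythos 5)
Your proposal follows essentially the same route as the paper: stratify the period sequences by degree, bound the number at degree $d$ by $O(n/2^d)$ (implicitly via the length-doubling behind Lemma~\ref{lem:patternPeriodNum}), and sum the geometric series $n + n/2 + \cdots + 1 \leq 2n$. The paper's own proof simply asserts the per-degree counts without argument, so the Fine--Wilf disjointness step you rightly flag as delicate (and your Kolpakov--Kucherov fallback) is justification the original omits rather than a departure from it.
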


\begin{proof}
We will count the number of period sequences in each degree:
\begin{itemize}
    \item[] The number of period sequences of degree $0$ can be at most $n$.
    \item[] The number of period sequences of degree $1$ can be at most $\frac{n}{2}$.
    \item[] The number of period sequences of degree $2$ can be at most $\frac{n}{4}$.
    \item[] \ldots
    \item[] The number of period sequences of degree $\log n$ can be at most $1$.
\end{itemize}
Summery: $n + \frac{n}{2} + .. + 1 \leq 2n = O(n)$
\qed
\end{proof}

\begin{theorem}
The data structure in Theorem~\ref{teo:NLOGNSolution} can be saved using only $O(n)$ space.
\end{theorem}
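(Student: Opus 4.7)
The plan is to exploit Lemma~\ref{lem:allPeriodNum}, which already gives an $O(n)$ bound on the total number of period sequences in $T$. The $O(n\log n)$ figure in Theorem~\ref{teo:NLOGNSolution} arose from attaching period-sequence information at each period node for every one of its (up to $\log n$, by Lemma~\ref{lem:periodPathNum}) relevant period paths. The plan is to strip this duplication: keep a single global pool of period-sequence records and replace the per-node bundle of references by a single pointer into that pool.

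Concretely, I would first build, once and for all, a global collection of period-sequence records. Since each period sequence is uniquely associated with one period path---namely the path whose period equals the sequence's period---I would chain the period sequences belonging to a given period path into one list sorted by length. Summed over all period paths, the total size of these lists equals the number of distinct period sequences, which is $O(n)$ by Lemma~\ref{lem:allPeriodNum}. Then at each period node $v$ I would store just one pointer: into the list of $v$'s own period path, aimed at the first sequence whose length is at least the string depth of $v$. This gives $O(1)$ words per period node, hence $O(n)$ in aggregate, and the pointer target can be set during a single top-down traversal of the suffix tree since along any period path it advances monotonically with depth.

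The step that needs justification is that one pointer per node is enough to answer a query. When the query pattern $P$ ends at a period node $v$, every non-overlapping occurrence of $P$ must lie inside a maximal run of $P$'s primitive period---otherwise two occurrences of $P$ would overlap in a way that forces a longer run---so the period sequences relevant to the query are exactly those reachable through $v$'s own pointer, and Lemma~\ref{lem:periodSequenceToOcc} converts them to the non-overlapping occurrences in the same $O(m + occ_{NO})$ time as in Theorem~\ref{teo:NLOGNSolution}. The main obstacle is making precise that this one period path really captures every occurrence, and in particular that the longer period paths on the root-to-$v$ chain (which by Lemma~\ref{lem:periodPathLen} correspond to strictly longer periods) cannot contribute any occurrence that is missed; I would handle this by identifying the period of $v$'s path with the primitive period of $P$, so all other candidate period paths are ruled out at once.
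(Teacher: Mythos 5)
Your proposal matches the paper's argument: the paper likewise observes that each period sequence belongs to exactly one period path, invokes Lemma~\ref{lem:allPeriodNum} to bound the total number of period sequences by $O(n)$, and concludes that storing each sequence once (with a single per-node pointer into its period path's list) gives $O(n)$ space. Your additional discussion of why one pointer per node still answers queries correctly is a reasonable elaboration of machinery the paper already set up before Theorem~\ref{teo:NLOGNSolution}, not a different route.
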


\begin{proof}
We save all the period paths in a data structure. Each one with its own period sequences.
Each period sequence appears in only one period path.
From Lemma~\ref{lem:allPeriodNum} we have $O(n)$ period sequences.
Therefore we save at most $O(n)$ space for all the period sequences.
Thus, we need only $O(n)$ space for this data structure.
\qed
\end{proof}

\begin{corollary}
Using these two different strategies for each type of pattern we can solve
the non-overlapping indexing problem in $O(n)$ space with $O(m + occ_{NO})$ query time.
\end{corollary}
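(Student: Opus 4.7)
The plan is to combine the two subsection machines into a single index and dispatch at query time based on whether the pattern is periodic or aperiodic. We store a single suffix tree of $T$ together with (i) the renaming labels of Theorem~\ref{teo:renaming} on its leaves, and (ii) for every period node, a pointer into the period-sequence list of its period path, as built in Theorem~\ref{teo:NLOGNSolution} and then compressed to $O(n)$ space by Lemma~\ref{lem:allPeriodNum}. Both components are $O(n)$, so the overall space is $O(n)$.

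To answer a query for pattern $P$, I would first locate the locus $v$ of $P$ in the suffix tree in $O(m)$ time. The node $v$ is marked as a period node or not according to the preprocessing, which tells us immediately whether $P$ is periodic (in the sense of Definition~1: it occurs more than twice overlapping). If $v$ is not a period node, invoke the aperiodic procedure: retrieve all $occ$ leaves in the subtree rooted at $v$, sort them in $y$-order via the renaming trick in $O(occ)$ time using Theorem~\ref{teo:renaming}, and make a linear scan discarding any occurrence whose start lies within $m$ positions of the previous reported one. Since $occ \le 2\cdot occ_{NO}$ in the aperiodic case, this is $O(m+occ_{NO})$.

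If $v$ is a period node, follow its stored pointer into the period-sequence list of its period path, which already starts at the entry of the first period sequence whose length is compatible with $|P|$. Walk the (suffix of the) list and for each period sequence $[s,e]$ with period length $pl$ emit the arithmetic progression prescribed by Lemma~\ref{lem:periodSequenceToOcc}. By that lemma the total work is $O(occ_{NO})$, and together with the $O(m)$ spent descending the tree we get $O(m+occ_{NO})$.

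The one nontrivial point I expect to have to justify carefully is the dispatch: we need the classification ``periodic vs.\ aperiodic'' to be consistent with the data-structure we route into, and we need to be sure the period-path pointer at $v$ lands on an entry whose period sequences are all still long enough to contain $P$ (otherwise we risk spending time on sequences that contribute no occurrence and break the $O(occ_{NO})$ bound). Both issues are handled during preprocessing: period nodes are marked, and each period node's pointer is placed at the position in its period path's list corresponding to its own string depth, so every sequence we read off has length at least $|P|$ and yields at least one non-overlapping occurrence, as noted in the proof of Theorem~\ref{teo:NLOGNSolution}. \qed
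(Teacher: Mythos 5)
Your proposal is correct and matches the paper's (implicit) argument: the corollary is stated without its own proof and follows directly by combining Theorem~\ref{teo:renaming} for aperiodic patterns with the $O(n)$-space period-sequence structure of Theorem~\ref{teo:NLOGNSolution} and Lemma~\ref{lem:allPeriodNum} for periodic ones, exactly as you describe. Your explicit treatment of the dispatch (marking period nodes and placing each node's pointer at the entry matching its string depth) is a detail the paper only sketches, but it is consistent with its construction and does not change the approach.
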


\section{A solution for Range Non-Overlapping Indexing}

We propose a better solution for this problem.
Our solution costs $O(n\log^\epsilon n)$ space for some $0<\epsilon<1$
and has query time of $O(m + \log\log n + occ_{ij,NO})$.

\subsection{Rank Sensitive Range Searching}
We use a data structure for answering the two-dimensional orthogonal range searching problem.
Alstrup et al \cite{alstrup00} proposed a data structure for this problem which
requires $O(n\log^\epsilon n)$ space with query time of $O(\log\log n + k)$
where $k$ is the number of points in the range.

Nevertheless, this range query data structure reports all the points in the range with no specific order.
We want to get those points in a specific order.
Therefore, in addition to this data structure we will use a method suggested by Grossi et al \cite{grossi05}
for a rank sensitive data structure. This gives us a data structure which uses $O(n\log^\epsilon n)$ space
with query time of $O(\log\log n)$ and  $O(1)$ per point, where the points will be reported in rank order.
For simplicity we will call this data structure RSDS from now on.

\subsection{Aperiodic Pattern}
We use a Rank Sensitive Data Structure to answer aperiodic queries.
In the RSDS we store all the occurrences as points by their $x$ value and $y$ value,
where the rank function of a point will be its $y$ value.
Given a pattern $P$ and range $i,j$ we can get $l,r$ from the Suffix Tree,
the leftmost leaf and the rightmost leaf which are occurrences of $P$.
Then we will do a range query for points within $[i,j]x[l,r]$ to get all the correct occurrences.
Because the rank in the RSDS is by $y$ value, we will get the points and therefore the occurrences,
sorted in the text order.
The only remaining action is filtering the overlapping occurrences.

The RSDS costs $O(n\log^\epsilon n)$ space. RSDS query time is $O(\log\log n + k)$
where $k$ is the size of the output which is equal to $O(occ_{ij})$.
In our case $k$ equals $O(occ_{ij,NO})$ because for an aperiodic pattern $O(occ_{ij}) = O(occ_{ij,NO})$.
Therefore, aperiodic pattern has query time of $O(m)$ for searching the Suffix Tree plus $O(\log\log n + occ_{ij,NO})$ for the RSDS query.
Concluded in $O(m + \log\log n + occ_{ij,NO})$.

\subsection{Periodic Pattern}
The periodic case is more complex.
We save all period sequences in the text as points in two RSDS.
For a periodic sequence $[s,e]$ with period length $pl$, we save two points $(x_{1},y_{1}), (x_{2},y_{2})$ with the following values:
\begin{itemize}
  \item[] $x_{1}$ = Index of the suffix of $s$ in the left to right order of all the ST leaves
  \item[] $y_{1}$ = $s$
  \item[] $x_{2}$ = Index of the suffix of $s$ in the left to right order of all the ST leaves
  \item[] $y_{2}$ = $e-pl + 1$
\end{itemize}

Point $(x_{1}, y_{1})$ will be saved in the first RSDS with a rank function of $x$ value in descending order.
Point $(x_{2}, y_{2})$ will be saved in the second RSDS with a rank function of $x$ value in ascending order.
Sometimes there will be multiple period sequences with the same start index or end index, each with a different degree.
When this happens we save only the one with the highest degree.
We can easily convert a period sequence $[s,e]$ to these two points and vice versa.

Following Lemma~\ref{lem:allPeriodNum} the number of points in the two RSDS is $O(n)$.
Hence, each RSDS costs $O(n\log^\epsilon n)$ space.

Given a pattern $P$ of length $m$ and range $i,j$ we answer using Algorithm~\ref{algo:PeriodicPatternRangeQuery}.

\begin{algorithm}[H]
\caption{Periodic Pattern Range Query}
\label{algo:PeriodicPatternRangeQuery}
    Get the range $l,r$ from the Suffix Tree \;
    $S \longleftarrow $ Query first RSDS for all points within $[i,j]x[l,r]$ \;
    $S \longleftarrow S~\cup$ Query second RSDS for all points within $[i,j]x[l,r]$ \;
    $S \longleftarrow S~\cup$ Query first RSDS for the first point within $[1,i]x[l,r]$ \;
    $PS \longleftarrow convertAllPointsToPeriodSequences(S)$ \;
    $PS2 \longleftarrow \emptyset$ \;
    \For{$ps \in PS$} {
        $x \longleftarrow ps$ \;
        \While{$x$ period length is greater than $m$} {
            $x \longleftarrow $ the first period sequence inside $x$ \;
        }
        $PS2 \longleftarrow PS2 \cup \{x\}$
    }
\end{algorithm}

Getting the first period sequence inside a period sequence can be done by using another data structure saving for each period sequence its period length, and a pointer to the first period sequence in it.
Thus, given a period sequence it costs $O(1)$ to find the first period sequence inside it with a degree decreased by one.

\begin{lemma}\label{lem:periodSequenceDegrees}
The number of period sequences we have to go down in order to find our appropriate period sequence
is lower than the number of occurrences that will be extracted from the period sequences
inside the first period sequence we received.
\end{lemma}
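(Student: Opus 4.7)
The strategy is to amortize each descent step against at least one extracted non-overlapping occurrence arising from a nested \emph{sister} period sequence that the periodic structure of $ps_0$ forces to lie inside it.

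Fix the descent chain $ps_0 \supsetneq ps_1 \supsetneq \cdots \supsetneq ps_d$, and consider a single step $ps_{j-1} \to ps_j$. By the definition of a period sequence, the period $p_{j-1}$ is repeated at least three times inside $ps_{j-1}$, so the inner period structure that defines $ps_j$ appears identically in every copy of $p_{j-1}$: the copy in the first block yields $ps_j$ itself, while the copies in the second and third blocks yield two further period sequences with the same period $p_j$, shifted by $pl_{j-1}$ and $2\,pl_{j-1}$ and still contained in $ps_0$. I would then argue that each of these two sisters is returned by the range queries of Algorithm~\ref{algo:PeriodicPatternRangeQuery}: its start index differs from those of $ps_0,\ldots,ps_{j-1}$, so the tie-breaking rule that keeps only the highest-degree period sequence per start index preserves it in the RSDS; and since all copies of $p_{j-1}$ are character-by-character identical, the sister begins with the same $m$-character prefix as $ps_0$, namely $P$, placing it in the lexicographic range $[l,r]$ and the positional range $[i,j]$. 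Once the sister enters $PS$, the algorithm descends further if needed and Lemma~\ref{lem:periodSequenceToOcc} extracts at least one non-overlapping occurrence of $P$ aligned with the sister's starting position. Summing two sisters over each of the $d$ descent levels gives at least $2d$ distinct extracted non-overlapping occurrences contained in $ps_0$, strictly more than $d$.

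The main obstacle is the degenerate case in which $p_{j-1}$ is itself $p_j$-periodic: then the $p_j$-periodic run spans block boundaries and collapses into a single period sequence filling all of $ps_{j-1}$, so no distinct level-$j$ sisters appear. In that case $ps_j$ inherits the full extent of $ps_{j-1}$ and the length is preserved across the step; combined with the halving bound $pl_j \leq pl_{j-1}/2$ from Lemma~\ref{lem:periodPathLen} and the chain condition $pl_{d-1} > m$, this yields $|ps_d| \geq 3\, pl_0 > 3 \cdot 2^{d-1} m$ in the fully degenerate case, so Lemma~\ref{lem:periodSequenceToOcc} alone extracts more than $d$ non-overlapping occurrences from $ps_d$. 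A straightforward interpolation between the two extremes, charging $2$ sister-occurrences to each genuinely non-degenerate level and the remaining deficit to the inherited length of $ps_d$, handles mixed chains and completes the count $\mathrm{occurrences} > d$.
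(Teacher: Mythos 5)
Your proposal takes essentially the same route as the paper: the paper's proof is a one-line amortization ("each step down adds at least another occurrence"), and your sister-copy argument — charging each descent level to occurrences forced by the $\geq 3$ repetitions of the outer period, with the degenerate collapsed case handled by the length bound on $ps_d$ — is a detailed justification of exactly that per-step charge. The paper does not spell out the mechanism or the degenerate case at all, so your write-up is a faithful (and more explicit) elaboration of its argument rather than a different approach.
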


\begin{proof}
Each degree we get down means that there is another occurrence in the next period sequence.
Each step down, adds at least another occurrence.
Therefore, until we get to the appropriate period sequence we work at most $O(k)$
where $k$ is the number of occurrences we will get from the period sequences inside the first period sequence we encounter.
\qed
\end{proof}

By Lemma~\ref{lem:periodSequenceDegrees} it does not cost us more time
when we get a period sequence whose period length is longer than the pattern length.

\begin{theorem}
All the non-overlapping occurrences can be calculated from the period sequences got by these three queries.
\end{theorem}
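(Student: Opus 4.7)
The plan is to prove completeness: every non-overlapping occurrence of $P$ whose starting index lies in $[i,j-m+1]$ can be generated, via Lemma~\ref{lem:periodSequenceToOcc}, from some period sequence returned (possibly after the inner-sequence walk-down) by one of the three RSDS queries. Combined with the observation that the arithmetic formula in Lemma~\ref{lem:periodSequenceToOcc} is easily clipped to the sub-interval $[i,j-m+1]$, this immediately yields that the algorithm enumerates exactly the desired set of non-overlapping occurrences.

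First I would note that any non-overlapping occurrence of the periodic pattern $P$ in $T[i:j]$ sits inside some period sequence $[s,e]$ whose period pattern has $P$ as a prefix; hence the suffix of $T$ starting at $s$ is a leaf of the locus of $P$ in the Suffix Tree, and its stored point lies in the $x$-range $[l,r]$. I would then split such period sequences into three exhaustive cases according to where $s$ and the last period start $e-pl+1$ fall relative to $[i,j]$: (i) $s\in[i,j]$, which places the point $(x_1,s)$ in $[i,j]\times[l,r]$ and is therefore picked up by query 1; (ii) $s<i$ with $e-pl+1\in[i,j]$, which places the point $(x_2,e-pl+1)$ in $[i,j]\times[l,r]$ and is picked up by query 2; and (iii) $s<i$ together with $e-pl+1>j$, so that $[s,e]$ strictly overhangs $[i,j]$ on both sides. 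Case (iii) is the one handled by query 3; using the storage rule that keeps only the highest-degree representative among period sequences sharing a start or end, together with Lemma~\ref{lem:periodPathLen}, I would argue that at most one stored period sequence in the $x$-range $[l,r]$ can strictly overhang $[i,j]$ on both sides, and that this unique sequence is precisely what the ``first point'' of query 3 retrieves.

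Finally, I would verify that the while-loop in Algorithm~\ref{algo:PeriodicPatternRangeQuery} correctly prepares each retrieved sequence for Lemma~\ref{lem:periodSequenceToOcc}. If a returned sequence has period length exceeding $m$, then its period is not a period of $P$, so we descend to its first inner period sequence; by construction this shrinks the period length, and by Lemma~\ref{lem:patternPeriodNum} the descent terminates after at most $\log n$ steps at a sequence whose period length is at most $m$, whose span still covers the relevant occurrences of $P$. Applying Lemma~\ref{lem:periodSequenceToOcc} to each sequence in $PS2$, with the resulting arithmetic progression clipped to $[i,j-m+1]$, then produces every non-overlapping occurrence of $P$ in $T[i:j]$ and nothing else. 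The hardest part will be the uniqueness argument in case (iii) and the verification that, once the walk-down has located an inner period sequence of appropriate length, the occurrences it generates are exactly those straddling $i$ from the left, with no double-counting against the sequences delivered by queries 1 and 2.
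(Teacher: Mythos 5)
Your proposal follows essentially the same route as the paper: a case split on where a period sequence lies relative to $[i,j]$, with sequences starting in $[i,j]$ caught by the first query, sequences starting before $i$ but with a late period start in range caught by the second query, and sequences strictly overhanging $[i,j]$ on both sides caught by the third query, followed by the walk-down to a sequence of period length at most $m$. Your phrasing of the cases in terms of the stored coordinate $e-pl+1$ (rather than $e$) is slightly more faithful to what the second RSDS actually indexes, but the decomposition, the assignment of cases to queries, and the informal treatment of the case-(iii) uniqueness and duplicate elimination all match the paper's argument.
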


\begin{proof}
First of all we will see that each period sequence we get from these queries has at least one occurrence of $P$.
Let $(x,y)$ be the point we get. It corresponds to a period sequence $[s,e]$.
We get only points which have $x$ values between $l$ and $r$.
The $x$ value of a point is the index of the suffix of $s$ in the left to right order of all the ST leaves.
So if we get a point $(x,y)$ with $x$ between $l$ and $r$ it means that
the corresponding period sequence $[s,e]$ has an occurrence of $P$.
This happens because all the leaves in the ST between $l$ and $r$ are occurrences of $P$.

Now, we need to prove two more things.
The first is that all the period sequences we get from the RSDS are suitable for us
and that we haven't got unnecessary period sequences, which don't fit to $P$ in the range $i,j$.
The second thing is that we didn't miss any period sequence which can have some suitable occurrences.

We start by proving that we get all the occurrences of $P$ in the range $[i,j]$
from the period sequences we get in the three queries.
Period sequences of $P$ in $T$ can be in some cases.
Let $[s,e]$ be our period sequence.

The first case is that $[s,e]$ is out of the range $[i,j]$, $s<e\leq i<j$ or $i<j\leq s<e$.
In this case we wouldn't like to get this period sequences at all.
The first two queries will not resolve these period sequences because we do a query on $[i,j]x[l,r]$
but $s$ is out of range and the points corresponding to this period sequence have $y$ value equals $s$.
Nevertheless, we can get this period sequence in the third query. However, it will be at most one period sequence
which can be checked in $O(1)$ time.

The second case, which is the simplest, is that $[s,e]$ is fully inside the range $[i,j]$, $i\leq s<e\leq j$.
In this case we get all the suitable period sequences from the first query.
Nevertheless, we can get the same period sequence twice, first from the first query and again from the second query.
Therefore, we will have to check any period sequence that we get in order to prevent duplicate occurrence reporting.

The third case is when only $e$ or $s$ is out of range but not both, $s<i<e\leq j$ or $i\leq s<j<e$.
This time if $i\leq s<j<e$ we will get the period sequence from the first query.
Otherwise, if $s<i<e\leq j$ we will get the period sequence from the second query.

The fourth case is when the range $[i,j]$ is fully inside the period sequence $[s,e]$, $s<i<j<e$.
In order to solve this case we have the third query which will resolve the last start of a period sequence
which is before index $i$.
This period sequence can be checked in $O(1)$ time.
\qed
\end{proof}


\begin{corollary}
Using these two different strategies for each type of pattern, the range non-overlapping text indexing problem
can be solved in $O(n\log^\epsilon n)$ space for some $0<\epsilon<1$ and query time of $O(m + \log\log n + occ_{ij,NO})$.
\end{corollary}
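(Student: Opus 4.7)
The plan is to bolt the aperiodic and periodic strategies together behind a single dispatcher that branches after the initial suffix tree descent. First I would walk the suffix tree on $P$ in $O(m)$ time to reach the locus $v$ with $x$-range $[l,r]$. A one-bit flag per suffix tree node, set during preprocessing to mark period nodes, lets me classify the query in $O(1)$ and route to the appropriate branch without extra asymptotic cost.

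For the aperiodic branch, I would feed $[i,j]\times[l,r]$ into an RSDS whose points are the $n$ suffix tree leaves (keyed by $y$ for rank), scan the returned points in text order, and discard each occurrence that overlaps the last retained one. Since $occ_{ij}=O(occ_{ij,NO})$ when the pattern is aperiodic, this branch contributes $O(\log\log n + occ_{ij,NO})$ past the suffix tree descent. For the periodic branch, I would run Algorithm~\ref{algo:PeriodicPatternRangeQuery} verbatim: two RSDS range queries for period sequences starting or ending in $[i,j]$, plus one extra point query to catch a period sequence containing the whole range $[i,j]$. For every returned point I convert to its period sequence, walk down the nested chain via the $O(1)$ pointer to the inner first period sequence until the period length fits $P$, and enumerate non-overlapping occurrences by Lemma~\ref{lem:periodSequenceToOcc}. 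The walk-down work is absorbed by Lemma~\ref{lem:periodSequenceDegrees} and the final enumeration is output-sensitive by Lemma~\ref{lem:periodSequenceToOcc}, so this branch also costs $O(\log\log n + occ_{ij,NO})$ beyond the suffix tree descent.

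The space accounting is then a straight sum: the suffix tree and the period-sequence auxiliary table (period length and pointer to inner first period sequence per sequence) cost $O(n)$; the aperiodic RSDS over $n$ leaves costs $O(n\log^\epsilon n)$; each of the two periodic RSDSs holds $O(n)$ points by Lemma~\ref{lem:allPeriodNum}, for another $O(n\log^\epsilon n)$. Combined with the $O(m)$ suffix tree descent and the two branch analyses above, this gives the claimed $O(n\log^\epsilon n)$ space and $O(m+\log\log n+occ_{ij,NO})$ query time.

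The point I expect to be most delicate is avoiding double reporting in the periodic branch: a single period sequence can be emitted by both the first and second RSDS queries, and a sequence containing $[i,j]$ surfaces separately through the third query. I would handle this by keeping a constant-size dedup check keyed on the sequence's $(s,e)$ endpoints as the point set $S$ is processed, which stays within the $O(1)$-per-point budget allowed by the RSDS, so the amortization against $occ_{ij,NO}$ still goes through.
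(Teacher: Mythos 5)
Your proposal is correct and matches the paper's intent exactly: the paper states this corollary without a separate proof, as it is just the assembly of the aperiodic RSDS branch and the periodic three-query branch analyzed in the two preceding subsections, which is precisely what you do. The only additions you make (the $O(1)$ period-node flag for dispatching and the per-item duplicate check) are reasonable fillings of details the paper leaves implicit.
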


\section{Conclusion}
We have studied the problem of non-overlapping indexing.
In this paper, we provide the first non-trivial solution for this problem.
In addition we proposed a better solution for a generalization of this problem, the range non-overlapping problem.

\bibliographystyle{splncs}
\bibliography{nonOverlapBib}

\end{document}